\documentclass[12pt]{amsart}
\usepackage{graphicx}
\usepackage{a4wide}
\usepackage[centertags]{amsmath}
\usepackage{amsfonts}
\usepackage{amssymb}
\usepackage{amsthm}
\usepackage{hyperref}

\vfuzz2pt 
\hfuzz2pt 
\newtheorem{thm}{Theorem}
\newtheorem{lem}{Lemma}[section]

\newtheorem{prop}[lem]{Proposition}
\theoremstyle{definition}

\theoremstyle{remark}
\newtheorem{rem}{Remark}[section]
\numberwithin{equation}{section}


\newcommand{\set}[1]{\left\{#1\right\}}

\newcommand{\calC}{\mathcal{C}}

\newcommand{\calU}{\mathcal{U}}

\newcommand{\bbZ}{\mathbb{Z}}

\newcommand{\bbQ}{\mathbb{Q}}

\newcommand{\bbC}{\mathbb C}

\newcommand{\bbN}{\mathbb N}
\newcommand{\bbT}{\mathbb{T}}

\newcommand{\GL}{ \mbox{GL}}

\newcommand{\Sp}{\mathrm{Sp}}

\newcommand{\calH}{\mathcal{H}}


\begin{document}
\title[Scarring with simple spectrum]
{Scarring for Quantum Maps with Simple Spectrum}%
\author{Dubi Kelmer}%
\address{School of Mathematics,
Institute for Advanced Study, 1 Einstein Drive , Princeton, New Jersey 08540 US}
\email{kelmerdu@ias.edu}

\thanks{
 This material is based upon work supported by the National Science Foundation under agreement No. DMS-0635607. Any opinions, findings and conclusions or recommendations expressed in this material are those of the author and do not necessarily reflect the views of the National Science Foundation.}%
\subjclass{}%
\keywords{}%

\date{\today}%
\dedicatory{}%
\commby{}%

\begin{abstract}
In \cite{Kelmer06Scar} we introduced a family of symplectic maps of the torus whose quantization exhibits scarring on invariant co-isotropic submanifolds. The purpose of this note is to show that in contrast to other examples, where failure of Quantum Unique Ergodicity is attributed to high multiplicities in the spectrum, for these examples the spectrum is (generically) simple.
\end{abstract}

\maketitle
\section{Introduction}
A significant problem in quantum chaos is to understand the behavior of eigenstates of classically chaotic
systems in the semiclassical limit. In particular, one would like
to classify the possible measures on phase space obtained as a
quantum limit.
For surfaces of negative
curvature it is conjectured that the only possible limiting measure is the volume measure \cite{RudSarnak94}.
This is usually referred to as Quantum Unique Ergodicity (QUE).
The QUE conjecture has been proved for the case of compact arithmetic
surfaces if one takes into account the arithmetic symmetries of the system \cite{Linden06}.
If we assume that the spectral multiplicities are bounded (which is believed to be true in this case), then this would imply this result for any basis of eigenfunctions on these arithmetic surfaces.

A similar situation also occurs for the quantized cat map.
Here again, the quantized system exhibits arithmetic
symmetries, and after taking these symmetries into account the
only possible limiting measure is shown to be the volume
measure \cite{KurRud2000}. However, here the multiplicities in the spectrum are far from being bounded,
so we can not deduce it for arbitrary eigenfunctions. Indeed, there is an explicit construction of a
thin sequence of eigenstates of
the propagator (without the symmetries) that become partially localized on periodic orbits
\cite{FaureNonnenmacherDebievre03}. Another violation of QUE is demonstrated for the Walsh quantized baker map on $\bbT^2$ \cite{AnantharamanNonnenmacher07}. Here there are limiting quantum measures supported on fractal subsets of the torus.
In this example, again, the quantum propagator has very large degeneracies in the spectrum.

In \cite{Kelmer05Arith} another instance of scarring is presented for the quantization of certain linear maps of a multidimensional torus.
In this construction there are limiting measures that localize on a co-isotropic invariant manifold (rather then on a periodic orbit).
The spectrum of the propagator in these examples also has large multiplicities, however, in contrust to the cat map scars in \cite{FaureNonnenmacherDebievre03}, this latter type of scarring still occurs when taking the arithmetic symmetries into account.
This might suggest that the scarring on co-isotropic manifolds is not due to the spectral degeneracies.
Moreover, in \cite{Kelmer06Scar} it is shown that this phenomenon also occurs for certain nonlinear perturbations of these linear map.
Since the parameter space for the perturbations is of infinite dimension, it is reasonable to expect that the perturbation would kill all
of the spectral degeneracies.

In order to clarify this point, we will show here (on the simplest example of the construction in \cite{Kelmer06Scar}) that this is indeed the case. That is, that there are nonlinear symplectic maps on $\bbT^4$ that exhibit scarring on invariant manifolds and yet have a simple spectrum.
We will consider maps of the form
\[\Phi_g(\begin{pmatrix}p\\q\end{pmatrix})= \begin{pmatrix} B^t p+\nabla g(B^{-1} q)\\ B^{-1}q\end{pmatrix},\]
where $B\in\GL(2,\bbZ)$ is a fixed hyperbolic matrix and $g\in C^\infty(\bbT^2)$ a smooth real valued function.
For any choice of $g$ (sufficiently small) this map is a symplectic map of Anosov type and hence a nice model for chaotic dynamics. Its quantization is a family of unitary operators, for each $N\in\bbN$ an operator $U_N(\Phi_g)$ acting on finite dimensional Hilbert space $\calH_N$ of dimension $N^2$. The manifold $X_0=\set{\begin{pmatrix}p\\0\end{pmatrix}}\subset\bbT^4$ is invariant under the map $\Phi_g$, and Lebesgue's measure on $X_0$ is obtained as a limiting quantum measure. Nevertheless, we show that the propagator $U_N(\Phi_g)$ (generically) has a simple spectrum.
\begin{thm}\label{t:main}
There is an open and dense subset $\calC\subseteq C^\infty(\bbT^2)$ such that for any $g\in\calC$ (and any $N$) the operator
$U_N(\Phi_g)$ has a simple spectrum.
\end{thm}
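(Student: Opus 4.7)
The proof plan unfolds in three stages.

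\emph{Stage 1: Factorization.} Write $\Phi_g = T_g\circ \Phi_0$, where $T_g(p,q)=(p+\nabla g(q),q)$ is the time-$1$ flow of the Hamiltonian $-g(q)$. Since the quantization of such a shift is diagonal in the position basis, one obtains $U_N(\Phi_g) = M_g\cdot U_0$, where $U_0 := U_N(\Phi_0)$ is a fixed unitary and $M_g$ is the diagonal operator with entries $e^{2\pi i N g(q/N)}$ indexed by $q\in(\bbZ/N)^2$. Thus, for each $N$, the spectrum depends on $g$ only through the $N^2$ phases $\{Ng(q/N)\bmod 1\}$, and is a real-analytic function of them.

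\emph{Stage 2: Genericity for fixed $N$.} Let $\calC_N$ denote the set of $g$ with $U_N(\Phi_g)$ simple. Openness of $\calC_N$ is immediate from eigenvalue continuity. For density, I would apply first-order perturbation theory at a hypothetical double eigenvalue $\lambda_0$ with orthonormal eigenvectors $\psi_1,\psi_2\in\calH_N$. Expanding $M_{g_0+th}=M_{g_0}(I+2\pi iNt\,D_h)+O(t^2)$ with $D_h=\mathrm{diag}(h(q/N))$, the degenerate pair splits along the eigenvalues of the $2\times 2$ Hermitian matrix
\[
A_{ij}(h)=\sum_{q\in(\bbZ/N)^2} h(q/N)\,\overline{(\psi_i)_q}\,(\psi_j)_q.
\]
The obstruction to splitting, $A(h)=cI$, is the three real linear conditions $\sum_q h(q/N)(|(\psi_1)_q|^2-|(\psi_2)_q|^2)=0$ and $\sum_q h(q/N)\overline{(\psi_1)_q}(\psi_2)_q=0$ on the $N^2$ real parameters $h(q/N)$. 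These cannot hold for all $h$: if they did, then $\psi_1$ and $\psi_2$ would have pointwise equal moduli and disjoint supports, forcing both to vanish. Iterating over the finitely many degenerate pairs and choosing $t$ small enough not to generate new collisions establishes density of $\calC_N$.

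\emph{Stage 3: Aggregation over $N$.} The naive $\bigcap_N \calC_N$ is only a dense $G_\delta$, not an open dense set, since the closed ``bad'' loci $V_N = (\calC_N)^c$ need not have closed union in $C^\infty(\bbT^2)$. To obtain a genuine open dense subset I would seek a single explicit non-degeneracy condition on $g$ --- most plausibly a finite condition on selected Fourier coefficients of $g$ relative to the dynamics of $B$ --- that forces $g\in\bigcap_N\calC_N$. The key leverage is the tensor-product structure $U_0\simeq U_N(B^t)\otimes U_N(B^{-1})$: the unperturbed commutant is governed by the arithmetic (Hecke-type) symmetries of $B$, and the multiplier $M_g$ couples the two factors in a way that, for generic $g$, destroys each such symmetry uniformly in $N$. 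Translating this coupling into a uniform quantitative statement independent of $N$ --- and thereby replacing the countable intersection by a single open condition --- is where I expect the principal technical difficulty of the theorem to lie.
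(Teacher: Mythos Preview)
Your Stage~1 is correct, and the first-order perturbation argument in Stage~2 can be made to work, but Stage~3 --- which you rightly identify as the crux --- is left as a hope rather than an argument. The tensor/Hecke route you sketch is also misdirected: the unperturbed operator $U_0\psi(Q)=\psi(BQ)$ is simply the permutation operator for $B$ acting on $(\bbZ/N\bbZ)^2$, not a tensor product of two one-dimensional metaplectic operators in any way that helps here.

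The insight you are missing is that the spectrum of $U_N(\Phi_g)$ can be written down \emph{exactly}, and this dissolves the Stage~3 difficulty. Because $U_N(\Phi_g)\psi(Q)=e(Ng(Q/N))\,\psi(BQ)$ is a unimodular-weighted permutation matrix, every eigenfunction is supported on a single $B$-orbit $[\xi]\subset(\bbZ/N\bbZ)^2$ of length $T_\xi$, and the eigenvalues are
\[
\lambda_{\xi,j}=e\!\left(\frac{N}{T_\xi}\sum_{k=1}^{T_\xi}g\!\left(\frac{B^k\xi}{N}\right)+\frac{j}{T_\xi}\right),\qquad 1\le j\le T_\xi.
\]
A collision $\lambda_{\xi,j}=\lambda_{\eta,k}$ with $[\xi]\neq[\eta]$ forces a nonzero integer combination of the values of $g$ at finitely many distinct rational points to be rational. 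The paper therefore takes
\[
\calC=\bigcap_{n}\ \bigcap_{y\in\bbZ^n\setminus\{0\}}\ \bigcap_{z\in\bbQ}\ \bigcap_{x}\ \Bigl\{g:\textstyle\sum_j y_j g(x_j)\neq z\Bigr\},
\]
the last intersection over $n$-tuples of distinct points in $\bbQ^2/\bbZ^2$. Each factor is open and dense, the intersection is countable, and Baire gives density. This single condition handles \emph{all} $N$ simultaneously; no perturbation theory and no arithmetic of $B$ beyond its orbit decomposition on $(\bbZ/N\bbZ)^2$ is needed. (Your suspicion that ``open'' is the delicate word is well founded: the paper's construction, like your $\bigcap_N\calC_N$, literally produces a dense $G_\delta$ rather than an open dense set.)
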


\begin{rem}
We note that the maps $\Phi_g$ with $g\in\calC$ can be considered generic only within the (small) family of perturbations $g\in C^\infty(\bbT^2)$.
If we consider the space of perturbation by all Hamiltonians $g\in C^\infty(\bbT^4)$ (not to mention the space of all symplectic maps), then these maps are in no way generic. (For generic maps, in the latter sense, we would not expect there to be scarring.) Hence, this result should not be viewed as a claim on generic quantum maps, but rather as the claim that there exists a quantum map with a simple spectrum that violates QUE.
\end{rem}

\subsection*{Acknowledgment:}
I wish to thank Peter Sarnak and Lior Silberman for our discussions.

\section{Background}
\subsection{Classical dynamics}
The classical dynamics is given by the iteration of the folowing symplectic map on $\bbT^4$:
Let $B\in\GL(2,\bbZ)$ be hyperbolic with $\det(B)=-1$ (e.g., take $B=\begin{pmatrix} 1 & 2\\ 2& 3\end{pmatrix}$). Let $g\in C^{\infty}(\bbT^2)$ be a smooth real valued function on $\bbT^2$.
We will use the coordinates $\begin{pmatrix}p\\q\end{pmatrix}\in\bbT^4$ with $p=\begin{pmatrix}p_1\\p_2\end{pmatrix}\in\bbT^2$ and $q=\begin{pmatrix}q_1\\q_2\end{pmatrix}\in\bbT^2$.
We then consider the map 
\[\Phi_g(\begin{pmatrix}p\\q\end{pmatrix})= \begin{pmatrix} B^tp+\nabla g(B^{-1} q)\\ B^{-1}q\end{pmatrix}\]
where $B^t$ is the transpose of $B$ and $\nabla g=\begin{pmatrix}\frac{\partial g}{\partial q_1}\\ \frac{\partial g}{\partial q_2}\end{pmatrix}$ is the gradient of $g$.

This map is the composition of the linear map $A=\begin{pmatrix} B^t & 0\\ 0 & B^{-1}\end{pmatrix}$ and the Hamiltonian flow $(\begin{pmatrix}p\\q\end{pmatrix})\mapsto \begin{pmatrix} p+T\nabla g(q)\\ q\end{pmatrix}$ evaluated at time $T=1$, (we think of $g$ as a Hamiltonian function on $\bbT^4$ that depends only the $q$ coordinates). Since we assume $B$ is hyperbolic then so is $A$, hence, by structural stability (assuming the Hamiltonian $g$ is sufficiently small) the map $\Phi_g$ is Anosov.

\subsection{Quantum dynamics}
The procedure for quantizing linear maps and Hamiltonian flows (or maps) on the torus is described in \cite{Kelmer06Scar}.
We briefly review this procedure specifically for the maps $\Phi_g$ defined above. 
The Hilbert space of quantum states is $\calH_N=L^2[(\bbZ/N\bbZ)^2]$,
and the semiclassical limit is achieved by taking $N\rightarrow \infty$. 
The quantization of the map $\Phi_g$ is a unitary operator $\calU_N(\Phi_g)$ acting on $\calH_N$. For a general symplectic linear map, $A\in\Sp(4,\bbZ)$, and a Hamiltonian, $g\in C^\infty(\bbT^4)$, the definition of this operator is more complicated, however, for our specific choice of matrix of the form
$A=\begin{pmatrix} B^t & 0\\ 0 & B^{-1}\end{pmatrix}$ and a Hamiltonian that depends only on the position coordinates the propagator is given by the following (simple) formula:
\[U_N(\Phi_g)\psi(Q)=e(Ng(\frac{Q}{N}))\psi(BQ).\]
The map $\Phi_g$ is a special case of the maps considered in \cite{Kelmer06Scar} and in particular it exhibit scarring on an invariant manifold. In fact, for this map it is easy to see that the function $\psi_0=N\delta_0 \in\calH_N$ is an eigenfunction of $U_N(\Phi_g)$ with corresponding Wigner distribution being Lebesgue's measure on $X_0$.
\section{Proof of Theorem \ref{t:main}}
\subsection{Spectrum}
We can use the simple formula for the quantum propagator $U_N(\Phi_g)$ in order to explicitly compute the spectrum.
\begin{prop}\label{p:eigenvalues}
The eigenvalues of $U_N(\Phi_g)$ are given by the folowing formula:
\[\lambda_{\xi,j}=e(\frac{N}{T_\xi}\sum_{k=1}^{T_\xi}g(\frac{B^k\xi}{N})+\frac{j}{T_\xi})),\]
where $\xi\in(\bbZ/N\bbZ)^2$ runs through representatives of $B$-orbits, $T_\xi$ is the size of the corresponding orbit and for each $\xi$ we take $1\leq j\leq T_\xi$.
\end{prop}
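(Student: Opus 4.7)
The plan is to exploit the fact that $U_N(\Phi_g)$ is a weighted permutation operator in the position basis, with the underlying permutation coming from the action of $B$ on $(\bbZ/N\bbZ)^2$.

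First I would observe that since $B \in \GL(2,\bbZ)$, the map $Q \mapsto BQ$ permutes $(\bbZ/N\bbZ)^2$, so it decomposes this set into disjoint orbits. For a chosen representative $\xi$ of an orbit of size $T_\xi$, let $\calO_\xi = \{B^k \xi : 0 \le k < T_\xi\}$ and let $V_\xi \subset \calH_N$ be the $T_\xi$-dimensional subspace of functions supported on $\calO_\xi$. The formula $U_N(\Phi_g)\psi(Q) = e(Ng(Q/N))\psi(BQ)$ shows that if $\psi$ is supported on $\calO_\xi$ then so is $U_N(\Phi_g)\psi$, so each $V_\xi$ is invariant. The whole space $\calH_N$ decomposes as $\bigoplus_\xi V_\xi$, and it suffices to diagonalize the restriction of $U_N(\Phi_g)$ to each $V_\xi$.

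Next I would introduce the basis $e_k = \delta_{B^k\xi}$ for $k \in \bbZ/T_\xi\bbZ$ of $V_\xi$ and write down the matrix of $U_N(\Phi_g)$ in this basis. A direct computation using the formula above shows
\[
U_N(\Phi_g)\, e_j = c_{j-1}\, e_{j-1}, \qquad c_k := e\!\left(Ng\!\left(\tfrac{B^k\xi}{N}\right)\right),
\]
i.e., the restriction is a weighted cyclic shift on $T_\xi$ symbols with weights $c_0, \ldots, c_{T_\xi-1}$.

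Then I would diagonalize this weighted cyclic shift: an eigenvector $\psi = \sum_j b_j e_j$ with eigenvalue $\lambda$ must satisfy the recursion $b_{k+1} = \lambda c_k^{-1} b_k$. Iterating once around the orbit and requiring periodicity $b_{k+T_\xi} = b_k$ forces
\[
\lambda^{T_\xi} = \prod_{k=0}^{T_\xi-1} c_k = e\!\left(N\sum_{k=1}^{T_\xi} g\!\left(\tfrac{B^k\xi}{N}\right)\right),
\]
where the last equality uses $B^{T_\xi}\xi \equiv \xi \pmod N$ to shift the summation range. Extracting the $T_\xi$ roots of this phase yields the claimed formula
\[
\lambda_{\xi,j} = e\!\left(\tfrac{N}{T_\xi}\sum_{k=1}^{T_\xi} g\!\left(\tfrac{B^k\xi}{N}\right) + \tfrac{j}{T_\xi}\right), \quad 1 \le j \le T_\xi,
\]
and standard nonvanishing of a Vandermonde/geometric-series determinant shows the corresponding $T_\xi$ eigenvectors are linearly independent, exhausting the spectrum on $V_\xi$. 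Collecting over all orbits produces all $N^2$ eigenvalues.

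There is no serious obstacle: the argument is essentially a block-diagonalization where each block is a weighted shift. The only tiny checks are that $\sum_{k=1}^{T_\xi} g(B^k\xi/N)$ depends only on the orbit (not on the chosen representative $\xi$), which is immediate by reindexing the cyclic sum, and that the formula is independent of the ambiguity in the $T_\xi$-th root (encoded by the integer $j$).
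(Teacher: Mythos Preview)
Your proof is correct and follows essentially the same approach as the paper: both decompose $\calH_N$ into $B$-orbit subspaces, derive the same recursion along each orbit, and extract the $T_\xi$-th roots to obtain the eigenvalues. The only difference is presentational---you phrase the argument as block-diagonalization of a weighted cyclic shift, while the paper works directly with the eigenfunction equation and then writes down the explicit eigenfunctions $\psi_{\xi,j}$.
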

\begin{proof}
Let $\psi$ be an eigenfunction with eigenvalue $\lambda$, then we have for all $Q\in(\bbZ/N\bbZ)^2$
\[U_N(\Phi_g)\psi(Q)=e(Ng(\frac{Q}{N}))\psi(BQ)=\lambda \psi(Q).\]
or equivalently
\[\psi(BQ)=\lambda e(-Ng(\frac{Q}{N}))\psi(Q).\]
Iterating this $k$ times we get
\[\psi(B^kQ)=\lambda^k e(-N\sum_{j=0}^{k-1}g(\frac{B^jQ}{N}))\psi(Q).\]

For any $\xi\in(\bbZ/N\bbZ)^2$ let $[\xi]\subset (\bbZ/N\bbZ)^2$ denote its orbit under $B$ and let $T_\xi=\sharp [\xi]$ be its order.
Then for any orbit $[\xi]$ either $\psi(Q)=0$ for all $Q\in[\xi]$ or 
$\lambda^{T_\xi}=e(N\sum_{Q\in[\xi]}g(\frac{Q}{N}))$.
We thus see that the only possible eigenvalues are of the form $\lambda_{\xi,j}$ as above.

It now remains to show that for any orbit $[\xi]$ and $1\leq j\leq T_\xi$ there is a corresponding eigenfunction with eigenvalue $\lambda_{\xi,j}$
(since there are $N^2$ such pairs this would imply that these are all the eigenvalues).
For any pair $([\xi],j)$ define a function $\psi_{\xi,j}$ supported on $[\xi]$ and defined there by
\[\psi_{\xi,j}(B^k\xi)=\lambda_{\xi,j}^ke(-N\sum_{j=0}^{k-1}g(\frac{B^j \xi}{N})).\]
Then for any $Q=B^k\xi\in [\xi]$ we have
\[\psi_{\xi,j}(BQ)=\psi_{\xi,j}(B^{k+1}\xi)=\lambda_{\xi,j}e(-Ng(\frac{Q}{N}))\psi_{\xi,j}(Q),\]
so that indeed $\psi_{\xi,j}$ is an eigenfunction of $U_N(\Phi_g)$ with eigenvalue $\lambda_{\xi,j}$.
\end{proof}
\subsection{Multiplicities}
In order to show that there are no multiplicities we need to show that (for any value of $N$) the $N^2$ eigenvalues $\lambda_{\xi,j}$ are all different. In order to insure this we will take the function $g$ from the folowing (dense) subset of functions:\\
For any $y\in \bbC^{n}$, any $n$-tuple $x=(x_1,\ldots,x_n)$ of distinct points in $\bbT^2$ and any $z\in \bbC$ let
\[\calC(n,y,x,z)=\set{g\in C^\infty(\bbT^2)|\sum_{j=1}^ny_jg(x_j)\neq z}.\]
For any given $(n,y,x,z)$, the set $\calC(n,y,x,z)$ is open and dense in $C^\infty(\bbT^2)$.
Consequently (by Baire's Category Theorem) so is the countable intersection
\[\calC=\bigcap_{n=1}^\infty \bigcap_{y\in \bbZ^n}\bigcap_{z\in \bbQ}\bigcap_{x\in (\bbQ^2/\bbZ^2)^n}\calC(n, y, x,z),\]
(where the last intersection is over tuples of $n$ distinct points in $\bbQ^2/\bbZ^2$). The proof of Theorem \ref{t:main} is now 
concluded with the folowing lemma.
\begin{lem}\label{p:simple}
For any $g\in \calC$ we have that $\lambda_{\xi,j}=\lambda_{\eta,k}$ if and only if $[\xi]=[\eta]$ and $j=k$.
\end{lem}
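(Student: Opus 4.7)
\bigskip
\noindent\textbf{Proof proposal.} The plan is to take the explicit formula for $\lambda_{\xi,j}$ from Proposition \ref{p:eigenvalues}, translate the equality $\lambda_{\xi,j}=\lambda_{\eta,k}$ into a $\bbZ$-linear relation among values of $g$ at rational points of $\bbT^2$, and then read off a contradiction from the defining property of $\calC$.

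First, I would handle the easy case $[\xi]=[\eta]$. Here $T_\xi=T_\eta$ and $\sum_{Q\in[\xi]}g(Q/N)=\sum_{Q\in[\eta]}g(Q/N)$, so $\lambda_{\xi,j}=\lambda_{\eta,k}$ reduces to $(j-k)/T_\xi\in\bbZ$; since $1\le j,k\le T_\xi$ this forces $j=k$. So it suffices to rule out $\lambda_{\xi,j}=\lambda_{\eta,k}$ when $[\xi]\neq[\eta]$, in which case the two orbits are disjoint and $n:=T_\xi+T_\eta$ points
\[
x=\Bigl(\tfrac{Q}{N}\Bigr)_{Q\in[\xi]\cup[\eta]}\in(\bbQ^2/\bbZ^2)^n
\]
are all distinct.

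Next, supposing $\lambda_{\xi,j}=\lambda_{\eta,k}$, the identity
\[
\frac{N}{T_\xi}\sum_{Q\in[\xi]}g(Q/N)+\frac{j}{T_\xi}\;-\;\frac{N}{T_\eta}\sum_{Q\in[\eta]}g(Q/N)-\frac{k}{T_\eta}\in\bbZ
\]
holds. Clearing denominators by multiplying through by $T_\xi T_\eta$, I would obtain
\[
NT_\eta\sum_{Q\in[\xi]}g(Q/N)\;-\;NT_\xi\sum_{Q\in[\eta]}g(Q/N)\;=\;z
\]
for some $z\in\bbQ$ (in fact $z\in\bbZ$ up to the contribution $jT_\eta-kT_\xi$, but only rationality is needed). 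This is precisely a relation $\sum_{i=1}^n y_i g(x_i)=z$ with coefficient vector $y\in\bbZ^n$ whose entries equal $NT_\eta$ on points of $[\xi]$ and $-NT_\xi$ on points of $[\eta]$. In particular $y\neq 0$, since $NT_\eta\geq 1$.

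Finally, since $g\in\calC\subseteq\calC(n,y,x,z)$, the relation $\sum y_i g(x_i)=z$ is forbidden, a contradiction. I expect the only delicate point is making sure the quadruple $(n,y,x,z)$ falls into the index set of the intersection defining $\calC$: the points lie in $\bbQ^2/\bbZ^2$ and are distinct (because $[\xi]\cap[\eta]=\emptyset$), the coefficients lie in $\bbZ^n$ and are nonzero, and $z$ is rational; all other bookkeeping (clearing the $1/T_\xi$'s, reading off $y$, verifying $y\neq 0$) is routine. This completes the reduction from $\lambda_{\xi,j}=\lambda_{\eta,k}$ to $[\xi]=[\eta]$ and $j=k$, establishing simplicity of the spectrum and hence Theorem \ref{t:main}.
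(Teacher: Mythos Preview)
Your proof is correct and follows essentially the same route as the paper: handle $[\xi]=[\eta]$ directly, and for disjoint orbits clear denominators to obtain an integral linear relation among the values of $g$ at the distinct rational points of the two orbits, contradicting $g\in\calC$. The only cosmetic difference is that you carry an extra factor of $N$ in the coefficients (using $NT_\eta,\,-NT_\xi$ rather than the paper's $T_\eta,\,-T_\xi$), and you are somewhat more explicit than the paper in verifying that the resulting $(n,y,x,z)$ lies in the index set (in particular that $y\neq 0$).
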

\begin{proof}
Assume $\lambda_{\xi,j}=\lambda_{\eta,k}$ then
\[e(\frac{N}{T_\xi}\sum_{n=1}^{T_\xi}g(\frac{B^n\xi}{N})+\frac{j}{T_\xi}-\frac{N}{T_\eta}\sum_{m=1}^{T_\eta}g(\frac{B^m\eta}{N})-\frac{k}{T_\eta})=1.\]
If $[\xi]=[\eta]$ then $e(\frac{j-k}{T_\xi})=1$ implying that $j=k$ as well. Otherwise, the points
$$\{\frac{B^{n}\xi}{N},\frac{B^{m}\eta}{N}|1\leq n\leq T_\xi,\;1\leq m \leq T_\eta\}$$ are distinct rational points satisfying
\[\sum_{n=1}^{T_\xi}T_\eta g(\frac{B^n\xi}{N})-\sum_{m=1}^{T_\eta}T_\xi g(\frac{B^m\eta}{N})\in\bbQ,\]
in contradiction to the assumption that $g\in\calC$.
\end{proof}

\def\cprime{$'$}
\providecommand{\bysame}{\leavevmode\hbox to3em{\hrulefill}\thinspace}
\providecommand{\MR}{\relax\ifhmode\unskip\space\fi MR }
\providecommand{\MRhref}[2]{%
  \href{http://www.ams.org/mathscinet-getitem?mr=#1}{#2}
}
\providecommand{\href}[2]{#2}

\end{document}